\newtheorem{theorem}{Theorem}
\newtheorem{example}[theorem]{Example}
\newtheorem{proposition}[theorem]{Proposition}
\newenvironment{proof}[1][Proof]{\textbf{#1.} }{\ \rule{0.5em}{0.5em}}
\def\seuro{\operatorname{\mbox{\small \euro}}}
\begin{document}

\title{Cash non-additive risk measures:\\ horizon risk and generalized entropy}
\author{Giulia Di Nunno\thanks{Department of Mathematics,
University of Oslo, P.O. Box 1053 Blindern, 0316 Oslo Norway.
Email: giulian@math.uio.no
and NHH - Norwegian School of Economics, Helleveien 30, 5045 Bergen, Norway.}
\thanks{The research leading to these results has received funding from the Research Council of Norway (RCN) within the project {\it STORM - Stochastics for time-space risk models}  (nr. 274410).}
  \and Emanuela Rosazza Gianin\thanks{Department of Statistics and Quantitative Methods,
University of Milano-Bicocca, via Bicocca degli Arcimboldi 8, 20126 Milano Italy.
Email:  emanuela.rosazza1@unimib.it. }
\thanks{This author is member of GNAMPA-INdAM, Italy, and acknowledges the financial support of GNAMPA Research Project 2024 (PRR-20231026-073916-203).}}
\date{June 24, 2024}

\maketitle
\vspace{-0.7cm}
\begin{abstract}
Horizon risk (\cite{DNRG1}) is studied in the context of cash non-additive fully-dynamic risk measures induced by BSDEs. Furthermore, we introduce a risk measure based on generalized Tsallis entropy which can dynamically evaluate the riskiness of losses considering both horizon risk and interest rate uncertainty. The new {\it q-entropic risk measure on losses} can be used as a quantification of capital requirement.

\vspace{2mm}\noindent
\textbf{Keywords:} Fully-dynamic risk measures, cash sub-additive, time-consistency, BSDEs, horizon risk, h-longevity, generalized entropy

\vspace{2mm}\noindent
\textbf{MSC2020:} 60H10, 60H20, 91B70, 91G70
\end{abstract}

\section{Introduction}

Horizon risk is associated with the use of a risk measure designed for a long term positions when evaluating short term investments and vice versa.
This issue is particularly important, for example,  in the context of pensions and health insurance, where long-term claims are to be expected and hedged.
In this context, for instance, we know that the use of outdated mortality rate, or the choice of an incorrect cohort may lead to wrong premia evaluations with a consequent impact on capital requirements.
In \cite{DNRG1}, horizon risk has been identified using fully-dynamic risk measures and introducing the notion of {\it horizon longevity} or {\it h-longevity}, in short, as an index of quantification.
Indeed, fully-dynamic risk measures naturally take care of the horizon in an explicit form, since they show dependance on both the evaluation and the maturity times.

We work in a complete filtered probability space $(\Omega, \mathcal{F},  (\mathcal{F}_t)_{t\in [0,T]}, P)$ for a finite $T>0$.
We recall that
a {\it fully-dynamic risk measure} is a family $(\rho_{tu})_{t,u}\triangleq (\rho_{tu})_{0 \leq t \leq u \leq T}$ of risk measures indexed by two time parameters
$\rho_{tu}: L^{p}(\mathcal{F}_u) \to L^{p}(\mathcal{F}_t)$, with $p \in [1,+\infty]$,
that are monotone, convex, 
and, for $p=\infty$, continuous from below.

These were studied in~\cite{bion-nadal-di-nunno}, under the condition of $\mathcal{F}_t$-translation invariance, otherwise called {\it cash additivity}:
\begin{equation}
\label{eq: CA}
\rho_{tu} (X + m) = \rho_{tu} (X ) - m, \qquad \mbox{for any } X\in L^p(\mathcal{F}_u), m \in L^p(\mathcal{F}_t).
\end{equation}
When \eqref{eq: CA} does not hold, the risk measure is called {\it cash non-additive}. In particular, it is called {\it cash subadditive} (see~\cite{EK-rav} and also~\cite{mastrogiacomo, wang-csa}) when
\begin{equation}
\label{eq: CSA}
\rho_{tu} (X + m) \geq \rho_{tu} (X ) - m, \qquad \mbox{for any } X\in L^p(\mathcal{F}_u), m \in L^p_+(\mathcal{F}_t).
\end{equation}

To understand horizon risk, we have to recall the {\it restriction property} (see~\cite{bion-nadal-di-nunno}):
\begin{equation}
\label{eq: restriction}
\rho_{tu}(X) = \rho_{tv}(X), \qquad \text{for any } X\in L^p(\mathcal{F}_u), \quad v \geq u,
\end{equation}
which roughly speaking, means that a risky position $X$ at a short horizon $u$ is evaluated as if it was happening at a longer horizon.
On the contrary, whenever the restriction is lifted, we have an open possibility to quantify horizon risk. For this, the notion of {\it h-longevity} has been proposed in~\cite{DNRG1} in terms of a correction term:
  \begin{equation}
  \label{eq: time-value}
  \gamma(t,u,v,X) \triangleq  \rho_{tv}(X) - \rho_{tu} (X) \geq 0  \quad \mbox{ for any }   X \in L^{p}(\mathcal{F}_u), \; t \leq u \leq v.
  \end{equation}

When working with long time horizons, one can additionally recognize that the value of money varies.
 To take care of this uncertainty,
 we include explicitly discount factors based on interest rates as in~\cite{filipovic, EK-rav} and the discussion in~\cite{farkas}.
As noted in~\cite{EK-rav}, the explicit use of interest rate leads to the introduction of cash subadditive risk measures.
To explain, we deal with quantities expressed in unit of money. Thus we call $\seuro_u$ the unit of money at time $u$. Hence a financial investment available at $u$ is denoted $X\seuro_u$, where $X$ represents the size of the investment.
Also, let $(D_{tu})_{t,u}$ be the family of discount factors $D_{tu}$ on the time interval $(t,u]$. The unit of measurement for $D_{tu}$ is then $1/ \seuro_u$.
It is then assumed that $0 < d_{tu} \leq D_{tu} \seuro_u \leq 1$, for some (stochastic) lower bound $d_{tu}$.

For any cash additive fully-dynamic risk measure $(\phi_{tu})_{t,u}$ we define
\begin{equation}
\label{eq: rho}
\rho_{tu}(X) \triangleq \phi_{tu} (D_{tu} X \seuro_u) , \qquad X \in L^p(\mathcal{F}_u).
\end{equation}
Note that $\rho_{tu}$ is cash subadditive. In fact, since $D_{tu}\seuro_u \leq 1$, by monotonicity and cash additivity of $\phi_{tu}$, we have
\begin{eqnarray*}
\rho_{tu}(X+m) &=&  \phi_{tu} (D_{tu} (X + m) \seuro_u  ) \\
&\geq & \phi_{tu} (D_{tu} X \seuro_u + m )  = \phi_{tu} (D_{tu} X \seuro_u)- m =  \rho_{tu}(X)-m,
\end{eqnarray*}
for any $X\in L^p(\mathcal{F}_u)$ and $m \in L^p_+(\mathcal{F}_t)$.
This observation triggers the interest in cash subadditive and more generally cash non-additive risk measures, which will be discussed in this work.
Our goals are to investigate h-longevity and time-consistency for cash non-additive risk measures generated first by a single backward stochastic differential equation (BSDE) and then by a family of BSDEs.
In particular we want to quantify the riskiness of financial losses as a crucial input for the establishment of capital requirements considering simultaneously also horizon risk.
For this, we propose a fully-dynamic risk measure based on the Tsallis relative entropy, which is a generalization of the classical relative entropy, see~\cite{tsallis1, tsallis2}. We call it a {\it q-entropic fully-dynamic risk measure}.

The relationship between $(\rho_{tu})_{t,u}$ and $(\phi_{tu})_{t,u}$
is evident. Convexity and monotonicity are preserved as well as normalization: $\rho_{tu}(0) = 0$.

When it comes to time-consistency, we have to be more careful. We will see that some implications that are true for cash additive risk measures fail when dealing with the cash non-additive case.
Hereafter we write the different definitions of time-consistency formulated both with and without discount factors for convenience:
\begin{itemize}
  \item {\it Strong time-consistency} (or {\it recursivity}): for any $t, u, v \in [0,T]$ with $t \leq u \leq v$,
  \begin{eqnarray*}
  \label{strong tc}
&& \phi_{tu} (- D_{tu} \phi_{uv}( D_{uv} X \seuro_v) \seuro_u ) = \phi_{tv}(D_{tv} X \seuro_v) \notag \\
&&  \rho_{tu} (-\rho_{uv}(X))= \rho_{tv}(X) , \qquad \mbox{for } X \in L^{p}(\mathcal{F}_v).
  \end{eqnarray*}
    \item {\it Order time-consistency}: for any $s, t, u \in [0,T]$ with $s \leq t \leq u$,
     \begin{eqnarray*}
    \label{order tc}
\hspace{-10mm} &&     \phi_{tu}(D_{tu} X \seuro_u)=\phi_{tu}(D_{tu} Y \seuro_u), \; X,Y \in L^p(\mathcal{F}_u) \Longrightarrow \phi_{su}(D_{su} X \seuro_u)=\phi_{su}(D_{su} Y \seuro_u) \notag \\
 \hspace{-10mm}&& \rho_{tu}(X)=\rho_{tu}(Y), \quad X,Y \in L^p(\mathcal{F}_u) \Longrightarrow \rho_{su}(X)=\rho_{su}(Y) .
  \end{eqnarray*}
    \item {\it Weak time-consistency}: for any $t, u, v \in [0,T]$ with $t \leq u \leq v$, and $X \in L^{p}(\mathcal{F}_v)$,
  \begin{eqnarray*}
  \label{eq: weak-tc}
&&   \phi_{tv} ( D_{tv} (\phi_{uv}(0)-\phi_{uv}(D_{uv}X \seuro_v)) \seuro_v )= \phi_{tv}(D_{tv} X \seuro_v)  \notag\\
 && \rho_{tv} (\rho_{uv}(0)-\rho_{uv}(X))= \rho_{tv}(X) .
  \end{eqnarray*}
  \end{itemize}

The last definition gives us a hint of the fact that \emph{normalization} is an assumption in risk measures that should no be under estimated.
Indeed, the risk associated to the strategy ``do nothing'' may carry intrinsically long term risks.
For example, not to undertake some medical treatment in health and pensions contexts or, also, not to intervene in environment preservation.
In \cite{DNRG1}, fully-dynamic risk measures and horizon risk are discussed \emph{without} assumption of normalization. Here we proceed similarly.

 Our work is organized as follows.
 In Section~\ref{Sec2} we study fully-dynamic cash non-additive risk measures. We detail the relationships among the different notions of time-consistency, enhancing the role of normalization and how restriction or h-longevity intervene. For risk measures generated by a BSDE, we characterize restriction and h-longevity giving an explicit representation of $\gamma$ in~\eqref{eq: time-value}. In particular we study the cash subadditive case.
  In Section~\ref{Sec3} we introduce the q-entropic fully-dynamic risk measure to quantify the risk of losses.
In this case both cash subadditivity and h-longevity are captured.
We show that this risk measure is less conservative than the classical fully-dynamic entropic one, which is cash additive.
In Section~\ref{Sec4}, we propose a variation on the construction of fully-dynamic risk measures using a family of BSDEs. With this, the impact of h-longevity is enhanced. In this context we deal with the q-entropic case.

 \section{Time-consistency and h-longevity } \label{Sec2}

Hereafter, we investigate normalization, time-consistency and h-longevity for fully-dynamic risk measures $(\rho_{tu})_{t,u}$ that, in general, fail to be cash additive. We start with those induced by a BSDE to continue in the general case.

\subsection{Cash non-additive risk measures and BSDEs}

Let $(B_t)_{t \in [0,T]}$ be a $d$-dimensional Brownian motion and $(\mathcal{F}_t)_{t \in [0,T]}$ its $P$-augmented natural filtration. We restrict our attention on $L^2$ spaces.
We focus on fully-dynamic risk measures $(\rho_{tu})_{t,u}$ induced by BSDEs of the following form
\begin{equation} \label{eq: BSDE}
Y_t= X+ \int_t^u g(s,Y_s,Z_s) \, ds - \int_t^u Z_s \, dB_s \quad (u\in [0,T]),
\end{equation}
whose solution $(Y_t,Z_t)_t \triangleq (Y_t,Z_t)_{t \in [0,T]}$ can be seen as a nonlinear operator depending on the driver $g$ and evaluated at the final condition $X \in L^2( \mathcal{F}_u)$ (see~\cite{peng97}). In Peng's terminology,
$\mathcal{E}^g \left(X \vert \mathcal{F}_t \right)$ denotes the $Y$-component of the solution $(Y_t,Z_t)$ at time $t$ of the BSDE above, called conditional $g$-expectation of $X$ at time $t$.
Here below, we consider an adapted driver $g: \Omega \times [0,T]\times \Bbb R \times \Bbb R^d \to \Bbb R$
satisfying the \textit{standard assumptions}:
\vspace{-1mm}

\begin{itemize}
\item uniformly Lipschitz, i.e. there exists a constant $C>0$ such that, $dP \times dt$-a.e.,
\begin{equation*}
\vert g(\omega,t, y_1, z_1) - g(\omega, t , y_2, z_2)\vert \leq C (\vert y_1 - y_2 \vert + \vert z_1 - z_2\vert ),
\end{equation*}
for any $y_1, y_2 \in \Bbb R, \, z_1, z_2 \in \Bbb R^d$,
where $\vert \cdot \vert$ denotes the Euclidean norm in $\Bbb R^k$;
\item $E\left[\int_0^T \vert g(s,0,0)\vert ^2 \, ds \right]<+\infty$.
\end{itemize}
Such standard assumptions guarantee that equation~\eqref{eq: BSDE} admits a unique solution $(Y_t,Z_t)_{t}$, with $(Y_t)_{t} \in \Bbb H^2_{[0,T]} (\Bbb R)$ and $(Z_t)_{t} \in \Bbb H^2_{[0,T]} (\Bbb R^d)$, where
\begin{equation*}
\Bbb H^2_{[a,b]} (\Bbb R^k) \hspace{-1mm} \triangleq \hspace{-1mm} \Big\{
\mbox{adapted } \Bbb R^k\mbox{-valued processes } (\eta_s)_{s \in [a,b]}: \hspace{-1mm} E\Big[\hspace{-1mm}\int_a^b \hspace{-2mm} \vert \eta_s \vert^2 \, ds \Big] \hspace{-1mm} < \hspace{-1mm} \infty
\Big\}.
\end{equation*}
For the well-known relationship among BSDEs, nonlinear expectations and dynamic risk measures in the Brownian setting,  we refer to~\cite{barrieu-el-karoui, EK-rav, jiang, peng97, rg} (under the standard assumptions and on the Brownian setting) and to~\cite{barrieu-el-karoui, kolylanski} (for the non-Lipschitz case).
In particular,
\begin{equation*} 
\rho_{tu} (X)= \mathcal{E}^g \left(-X \vert \mathcal{F}_t \right), \quad X \in L^{2} (\mathcal{F}_u),
\end{equation*}
is a fully-dynamic risk measure. We recall that if $g$ does not depend on $y$, the risk measure is cash additive.
So, for cash non-additivity, we consider $g$ depending on $y$.


\begin{proposition} \label{prop: normal-restric}
a) $(\rho_{tu})_{t,u}$ is normalized if and only if $g(t,0,0)=0$ $dP\times dt$-a.e..

\noindent
b) $(\rho_{tu})_{t,u}$ has the restriction property if and only if $g(t,y,0)=0$ $dP\times dt$-a.e., for any~$y \in \mathbb{R}$.
\end{proposition}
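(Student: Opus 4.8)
The plan is to translate both claims into statements about the $g$-expectation attached to~\eqref{eq: BSDE} and then to exploit three facts: existence and uniqueness of solutions of~\eqref{eq: BSDE}, the flow (consistency) property $\mathcal{E}^g(\xi\mid\mathcal{F}_t)=\mathcal{E}^g(\mathcal{E}^g(\xi\mid\mathcal{F}_u)\mid\mathcal{F}_t)$ for $t\le u$, and uniqueness of the semimartingale decomposition of the continuous process $(Y_s)_s$. Throughout I write $\mathcal{E}^g_{[a,b]}(\cdot)$ for the $g$-expectation obtained by solving~\eqref{eq: BSDE} on $[a,b]$, so that $\rho_{tu}(X)=\mathcal{E}^g_{[t,u]}(-X)$ and $\mathcal{E}^g_{[a,b]}(\xi)=Y_a$ when the terminal condition at $b$ is $\xi$.

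For part a), normalization means $\rho_{tu}(0)=\mathcal{E}^g_{[t,u]}(0)=0$ for all $t\le u$. For the ``if'' direction, assume $g(t,0,0)=0$; then $(Y_s,Z_s)\equiv(0,0)$ solves~\eqref{eq: BSDE} with terminal condition $0$, and by uniqueness it is the unique solution, so $\mathcal{E}^g_{[t,u]}(0)=0$. For the converse, take $u=T$ and observe that $s\mapsto\rho_{sT}(0)$ is exactly the $Y$-component of the solution of~\eqref{eq: BSDE} on $[0,T]$ with terminal condition $0$; normalization forces $Y\equiv0$. Substituting $Y\equiv0$ into $dY_s=-g(s,0,Z_s)\,ds+Z_s\,dB_s$ and invoking uniqueness of the decomposition into finite-variation and martingale parts, the martingale part vanishes, giving $Z\equiv0$ $dP\times dt$-a.e., after which the drift term gives $g(s,0,0)=0$ $dP\times dt$-a.e.

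For part b), I first reduce the restriction property to a single-interval statement. By the flow property, for $X\in L^2(\mathcal{F}_u)$ and $t\le u\le v$ one has $\rho_{tv}(X)=\mathcal{E}^g_{[t,v]}(-X)=\mathcal{E}^g_{[t,u]}(\mathcal{E}^g_{[u,v]}(-X))$, whereas $\rho_{tu}(X)=\mathcal{E}^g_{[t,u]}(-X)$. Taking $t=u$, where $\mathcal{E}^g_{[u,u]}$ is the identity, property~\eqref{eq: restriction} becomes equivalent to $\mathcal{E}^g_{[u,v]}(\xi)=\xi$ for every $\mathcal{F}_u$-measurable $\xi$ and every $u\le v$ (equivalently one may use strict monotonicity, hence injectivity, of $\mathcal{E}^g_{[t,u]}$ for $t<u$). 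For the ``if'' direction, when $g(s,y,0)=0$ for all $y$, the pair $(Y_s,Z_s)\equiv(\xi,0)$ solves~\eqref{eq: BSDE} on $[u,v]$ with terminal condition $\xi$, since the drift $g(s,\xi,0)$ vanishes and $\xi$ is adapted on $[u,v]$; uniqueness yields $\mathcal{E}^g_{[u,v]}(\xi)=\xi$, and the flow identity gives restriction. For the converse, apply the reduced identity to a deterministic terminal value $\xi=y$: the flow property gives $Y_s=\mathcal{E}^g_{[s,v]}(y)=y$ for all $s\in[0,v]$, so the solution is constant, and as in part a) uniqueness of the semimartingale decomposition forces $Z\equiv0$ and then $g(s,y,0)=0$ $dP\times dt$-a.e.

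The main obstacle, and essentially the only delicate point in both converses, is passing from ``$Y$ is constant'' to the pointwise vanishing of the driver, together with the upgrade in part b) from ``for each fixed $y$'' to ``for all $y$ simultaneously''. The first is handled by uniqueness of the semimartingale decomposition: a constant process has no martingale part, which kills $\int_\cdot Z\,dB$ and hence $Z$, after which $\int_\cdot g\,ds\equiv0$ forces the integrand to vanish a.e. For the second, the a.e.\ set on which $g(\cdot,\cdot,y,0)=0$ a priori depends on $y$; I would fix exceptional null sets along a countable dense set $y\in\mathbb{Q}$, take their still-null union, and extend to all $y\in\mathbb{R}$ using the Lipschitz (hence continuous) dependence of $g$ on $y$ guaranteed by the standard assumptions.
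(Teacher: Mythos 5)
Your proof is correct. For part b) you follow essentially the paper's route: sufficiency by exhibiting the constant extension $(\xi,0)$ and invoking uniqueness of the BSDE solution, necessity by observing that restriction forces the $Y$-component to be constant, so that the martingale part $\int Z\,dB$ coincides with a finite-variation process and must vanish (the paper cites \cite[Prop.~1.2 in Ch.~4]{revuz-yor} for exactly this step), after which the surviving drift kills $g(\cdot,y,0)$; your reduction of restriction to the fixed-point identity $\mathcal{E}^g_{[u,v]}(\xi)=\xi$ via the flow property is just a cleaner packaging of the paper's observation that $Y^v_u=-X=Y^u_u$, and testing only deterministic terminal values $y$ is slightly leaner than the paper's use of general $X\in L^2(\mathcal{F}_u)$. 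Where you genuinely diverge is part a): the paper proves the converse by invoking the Converse Comparison Theorem (\cite{BCHMPeng}, \cite{jiang}), representing $g(t,0,0)$ as the a.s.\ limit along a subsequence of $\rho_{t,t+\varepsilon}(0)/\varepsilon$, whereas you reuse the semimartingale-decomposition argument on the solution with terminal condition $0$. Your variant is more self-contained (no $L^p$-convergence and subsequence extraction) and makes both parts run on a single mechanism; the paper's limit representation is a heavier tool, but it is one the authors need anyway later (it reappears in the proof of Prop.~\ref{prop: restriction-BSDE-family-nca}), so introducing it in a) is economical at the level of the whole paper. Your explicit upgrade of the exceptional null sets from $y\in\mathbb{Q}$ to all $y\in\mathbb{R}$ via Lipschitz continuity of $g$ in $y$ is a step the paper leaves implicit; note only that the same common-null-set argument is also needed in your sufficiency direction of b), where you substitute the random variable $\xi(\omega)$ into the $y$-slot of a driver that vanishes only $dP\times dt$-a.e.\ for each fixed $y$ -- the device you describe covers this, but it is worth saying so there as well.
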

\begin{proof}
a)
Assume that $\rho_{tu}(0)=0$ for any $0\leq t \leq u \leq T$.
As shown in the proof of the Converse Comparison Theorem~\cite[Thm.~4.1]{BCHMPeng} and~\cite[Lemma~2.1]{jiang}, we have
$$
g(t,y,z)=\lim _{\varepsilon \to 0} \frac{\rho_{t,t+\varepsilon} \left(-y-z \cdot (B_{t + \varepsilon} -B_{\varepsilon}) \right)-y}{\varepsilon}
$$
with convergence in $L^p$ with $p \in [1,2)$, for any $y \in \mathbb{R}, z \in \Bbb R^d$, and a.a. $t \leq u$.
By extracting a subsequence the convergence is $P$-a.s.
for which
$
g(t,0,0)=\lim _{\varepsilon \to 0} \frac{\rho_{t, t+\varepsilon} \left(0 \right)}{\varepsilon} =0
$
$dP\times dt$-a.e., since $\rho_{t,t+\varepsilon}$ is normalized. \smallskip
The converse implication is immediate.

\noindent
b) Assume now that the restriction property is satisfied. Then, for any $t,u,v$ such that $0\leq t \leq u \leq v \leq T$ and for any $X \in L^{2}(\mathcal{F}_u)$ and $v \geq u \geq t$, the relations $Y_t^u=\rho_{tu} (X)=\rho_{tv} (X)=Y_t^v$ hold, where
\begin{eqnarray*}
Y_t^u=\rho_{tu} (X)&=&-X+ \int_t^u g(s,Y^u_s, Z^u_s) ds - \int_t^u Z^u_s dB_s.
\end{eqnarray*}
Similar representation hold for $Y_t^v=\rho_{tv} (X)$.

In particular, we have that $Y_u^u=-X=Y_u^v$ for all $u \leq v$ and all $X \in L^2(\mathcal{F}_u)$, in view of the restriction property.
Taking the difference $Y_u^v - Y_u^u$, we obtain that
\begin{equation}\label{fv=mart}
\int_u^v g(s, Y_s^v, Z_s^v)ds = \int_u^v Z_s^v dB_s, \qquad \text{for all } 0 \leq u \leq v.
\end{equation}
Denote $M^{(v)}(u) \triangleq \int_0^u Z_s^v dB_s$ the martingale in $u\in [0,T]$, and
$f^{(v)}(u) \triangleq \int_0^u g(s, Y_s^v, Z_s^v)ds$ the process of finite variation
in $u\in [0,T]$. Then, by~\cite[Prop.~1.2 in Ch.~4]{revuz-yor}, a continuous martingale and a process of finite variation can be equal only if the martingale is constant. Hence \eqref{fv=mart}  implies that, for any $t$, the martingale $M^{(v)} $, starting from $0$ is equal to $0$ $dP\times du$-a.e..
Hence $Z_u^v= 0$, for a.a. $u \in [0,v]$, $dP$-a.s..
Going back to \eqref{fv=mart} , we also have that
$f^{(v)}(u) = f^{(v)}(v)$ for all $u \in [0,v]$, $dP$-a.s..
Taking the derivative with respect to $u$, we have then that $g(u, Y^v_u, 0) = 0$ for a.a. $u \in [0,v]$, $dP$-a.s..
In particular, $0 = g(v, Y^v_v, 0) = g(v, -X, 0)$ for all $X\in L^2(\mathcal{F}_u)$, for a.a. $v$, $dP$-a.s.. Then it follows that $g(v, y, 0) = 0$ for a.a. $v, \in [0,T], y \in \mathbb{R}$.

The converse implication was proved in~\cite{peng97} and, for cash additive risk measures, in~\cite{DNRG1}. Assume that $g(t,y,0)=0$ for any $t \in [0,T]$ and $y \in \mathbb{R}$. For any $X \in L^{2}(\mathcal{F}_u)$ and $v \geq u$, consider
\begin{eqnarray*}
\rho_{tu} (X)&=&-X+ \int_t^u g(s,Y^u_s, Z^u_s) ds - \int_t^u Z^u_s dB_s
\end{eqnarray*}
and similarly for $\rho_{tv} (X)$,
where $(Y^{X,u}_r,Z^{X,u}_r)$ (resp. $(Y^{X,v}_r,Z^{X,v}_r)$) denotes the solution corresponding to $\rho_{tu} (X)$ (resp. $\rho_{tv} (X)$) at time $r \leq u$. Since $(Y^{X,v}_r,Z^{X,v}_r)$ with
$$
Y^{X,v}_r= \left\{
\begin{array}{rl}
Y^{X,u}_r;& r \leq u \\
-X;& u < r \leq v
\end{array}
\right. \qquad
Z^{X,v} (r,s)= \left\{
\begin{array}{rl}
Z^{X,u}_r;& s \leq u \\
0;& u < s \leq v
\end{array}
\right.
$$
is a solution of $\rho_{tv} (X)$ when $g(t,y,0)=0$ for any $t \in [0,T]$ and $y \in \mathbb{R}$, the restriction property follows by the uniqueness of the solution.
\end{proof}

\medskip
Note that Prop.~\ref{prop: normal-restric} holds for any cash non-additive fully-dynamic risk measures.
Then we can give a qualitative conclusion that there exist only three possibilities:
\begin{itemize}
\itemindent=-10pt
\item[i)] if $g(t,y,0)=0$ for any $t, y$, then $(\rho_{tu})_{t,u}$ is both normalized and restricted;\vspace{-2mm}
\item[ii)] if $g(t,y,0) \neq 0$ for some $t $ and $y \neq 0$, $(\rho_{tu})_{t,u}$ is normalized but not restricted;\vspace{-2mm}
\item[iii)] if $g(t,0,0)\neq 0$ for any $t $, $(\rho_{tu})_{t,u}$ is neither normalized nor restricted.
\end{itemize}
We stress that in i) $(\rho_{tu})_{t,u}$ is necessarily cash additive. In fact, by~\cite{BCHMPeng} Remark after Lemma 4.5, this condition together with Lipschitz assumption and convexity of $g$ in $(y,z)$ implies that $g$ is independent of $y$. In other words, fully-dynamic risk measures, restricted and induced by a BSDE with Lipschitz driver, are necessarily cash additive.


 \subsection{Cash non-additivity and time-consistency}

Let $(\rho_{tu})_{t,u}$ be a cash non-additive fully-dynamic risk measure.
With the same arguments of~\cite{DNRG1}, we see that strong implies order time-consistency.

\begin{proposition}\label{prop: tc}
\noindent a) Weak time-consistency implies order time-consistency.

\noindent b) Under normalization and restriction: strong is equivalent to weak time-consistency.

\noindent c) Weak time-consistency, h-longevity, and $\rho_{tu}(0) \leq 0$ for any $t,u \in [0,T]$ with $t \leq u$, together imply sub (strong) time-consistency, i.e. for any $0 \leq s \leq t \leq u$
\begin{equation} \label{eq: sub TC-nca}
\rho_{st} (-\rho_{tu}(X)) \leq \rho_{su}(X) \quad \mbox{ for any } X \in L^{p}(\mathcal{F}_u).
\end{equation}
\end{proposition}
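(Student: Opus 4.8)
The plan is to treat the three parts separately, in each case specialising the weak time-consistency identity to a well-chosen index triple and then invoking, respectively, only the definitions, restriction together with normalization, and h-longevity together with the sign condition on $\rho_{tu}(0)$.

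For part a), I would apply weak time-consistency to the triple $(s,t,u)$ with $s \leq t \leq u$ and $X \in L^{p}(\mathcal{F}_u)$, which reads $\rho_{su}(\rho_{tu}(0) - \rho_{tu}(X)) = \rho_{su}(X)$. The left-hand side depends on $X$ only through $\rho_{tu}(X)$, so if $\rho_{tu}(X) = \rho_{tu}(Y)$ the arguments $\rho_{tu}(0) - \rho_{tu}(X)$ and $\rho_{tu}(0) - \rho_{tu}(Y)$ coincide, whence $\rho_{su}(X) = \rho_{su}(Y)$. This is exactly order time-consistency, and no further structure is required.

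For part b), under normalization $\rho_{uv}(0) = 0$, so weak time-consistency collapses to $\rho_{tv}(-\rho_{uv}(X)) = \rho_{tv}(X)$, while strong time-consistency reads $\rho_{tu}(-\rho_{uv}(X)) = \rho_{tv}(X)$. The sole discrepancy is the outer index $v$ versus $u$ applied to the same argument $-\rho_{uv}(X)$, which is $\mathcal{F}_u$-measurable because $\rho_{uv}$ takes values in $L^{p}(\mathcal{F}_u)$. The restriction property applied with $u \leq v$ gives precisely $\rho_{tv}(-\rho_{uv}(X)) = \rho_{tu}(-\rho_{uv}(X))$, so the two identities become interchangeable, yielding the equivalence in both directions.

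For part c), I would again start from weak time-consistency on $(s,t,u)$, setting $W \triangleq \rho_{tu}(0) - \rho_{tu}(X) \in L^{p}(\mathcal{F}_t)$ so that $\rho_{su}(W) = \rho_{su}(X)$. Since $W$ is $\mathcal{F}_t$-measurable and $s \leq t \leq u$, h-longevity in the form $\gamma(s,t,u,W) = \rho_{su}(W) - \rho_{st}(W) \geq 0$ gives $\rho_{su}(X) \geq \rho_{st}(W)$. It then remains to compare $\rho_{st}(W)$ with $\rho_{st}(-\rho_{tu}(X))$: the assumption $\rho_{tu}(0) \leq 0$ gives $W = \rho_{tu}(0) - \rho_{tu}(X) \leq -\rho_{tu}(X)$, and by monotonicity $\rho_{st}(W) \geq \rho_{st}(-\rho_{tu}(X))$. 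Chaining the inequalities delivers $\rho_{su}(X) \geq \rho_{st}(-\rho_{tu}(X))$, i.e.\ \eqref{eq: sub TC-nca}. The steps are essentially bookkeeping once the correct index triple is chosen; the only point demanding care is the direction of monotonicity in part c), where one must track that adding the nonpositive quantity $\rho_{tu}(0)$ lowers the argument and hence, under the risk-measure (monotone decreasing) convention, raises the value of $\rho_{st}$. The roles of the sign condition $\rho_{tu}(0) \leq 0$ and of h-longevity are exactly to turn the equalities of the normalized cash-additive setting into the one-sided inequality of sub strong time-consistency.
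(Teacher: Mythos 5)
Your proposal is correct and takes essentially the same approach as the paper: part a) is the same specialisation of the weak time-consistency identity to the triple $(s,t,u)$, part b) uses exactly normalization plus restriction to identify $\rho_{tv}$ and $\rho_{tu}$ on the $\mathcal{F}_u$-measurable argument $-\rho_{uv}(X)$, and part c) chains the weak time-consistency equality with h-longevity and decreasing monotonicity under $\rho_{tu}(0)\leq 0$. The only (immaterial) difference is in c), where you apply h-longevity to $W=\rho_{tu}(0)-\rho_{tu}(X)$ before monotonicity at horizon $t$, while the paper applies monotonicity at horizon $u$ first and then h-longevity to $-\rho_{tu}(X)$.
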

\begin{proof}
The proof of a) can be driven as in~\cite[Prop.~2]{DNRG1}.

\noindent b) Assume that strong time-consistency holds. For any $0\leq s \leq t \leq u$ and $X \in L^p(\mathcal{F}_u)$,
\begin{equation*}
\rho_{su}(X)=\rho_{st}(-\rho_{tu}(X))=\rho_{st}(\rho_{tu}(0)-\rho_{tu}(X))=\rho_{su}(\rho_{tu}(0)-\rho_{tu}(X)),
\end{equation*}
where the first equality is due to strong time-consistency, the second to normalization, and the latter to restriction. Weak time-consistency is therefore proved.
The converse implication can be checked similarly.

\noindent c) As in~\cite[Rk.~4]{DNRG1}, for any $0\leq s \leq t \leq u$ and $X \in L^p(\mathcal{F}_u)$
\begin{equation*}
\rho_{su}(X)=\rho_{su}(\rho_{tu}(0)-\rho_{tu}(X))\geq \rho_{su}(-\rho_{tu}(X))\geq \rho_{st}(-\rho_{tu}(X)),
\end{equation*}
where the equality comes from weak-time-consistency, while the two inequalities from monotonicity and h-longevity.
\end{proof}
\smallskip

The following example shows that order (or strong) time-consistency does not necessarily imply weak time-consistency for cash non-additive fully-dynamic risk measures.
\begin{example}[order $\nRightarrow$ weak; strong $\nRightarrow$ weak] \label{ex: order not weak}
Consider
\begin{equation*}
\rho_{tu}(X) = E_P \left[ \left. - e^{-r(u-t)} X \right\vert \mathcal{F}_t\right], \quad X \in L^P(\mathcal{F}_u),
\end{equation*}
\vspace{-1mm}
with $r>0$ being a deterministic interest rate. It is easy to check that $(\rho_{tu})_{t,u}$ is a cash subadditive and normalized fully-dynamic risk measure that satisfies order and strong time-consistency.
Nevertheless, weak time-consistency does not hold. In fact, for any $0\leq s \leq t \leq u$ and $X \in L^p(\mathcal{F}_u)$, for $r>0$ and $u>t$,
\begin{eqnarray*}
\rho_{su} (\rho_{tu}(0)-\rho_{tu}(X))&=& \rho_{su} (-\rho_{tu}(X)) \\
&=& E_P \left[ \left.  e^{-r(u-s)} E_P \left[ \left. - e^{-r(u-t)} X \right\vert \mathcal{F}_t\right] \right\vert \mathcal{F}_s\right] \\
&=& e^{-r(u-t)} \rho_{su}(X) \neq  \rho_{su}(X).
\end{eqnarray*}
\end{example}


\vspace{-2mm}
\subsection{Cash non-additivity and h-longevity}

We now investigate under which conditions on the driver h-longevity holds. Note that we implicitly improve the corresponding result established in~\cite{DNRG1} in the cash additive case, in fact we prove a necessary and sufficient condition and not only a sufficient one.

Let $t \leq u \leq v$ and $X \in L^p (\mathcal{F}_u)$. Then
\begin{eqnarray*}
\rho_{tu}(X)&=& -X+ \int_t^u g(s,Y_s^u,Z_s^u) dv - \int_t^u Z_s ^u dB_s
\end{eqnarray*}
and similarly for $\rho_{tv}(X)$.
Define now the processes
\begin{equation*}
\bar{Y}_s^u=\left\{
\begin{array}{rl}
Y_s^u;& s \leq u \\
-X;& u<s \leq v
\end{array}
\right. ;
\quad \bar{Z}_s^u=\left\{
\begin{array}{rl}
Z_s^u;& s \leq u \\
0;& u<s \leq v
\end{array}
\right. .
\end{equation*}

\begin{proposition} \label{prop: longevity-BSDE-nca}
H-longevity holds if and only if $g(s,y,0) \geq 0$ for any $s \in [0,T], y \in \mathbb{R}$.
Furthermore, in this case,  for any $t,v \in [0,T]$ with $t \leq v$, the h-longevity $\gamma$ in \eqref{eq: time-value}
$$
\gamma(t,u,v,X)=E_{\widetilde{Q}_X} \Big[ e^{\int_t^v \Delta_y g(s) ds} \int_u^v g(s,-X,0) ds | \mathcal{F}_t \Big], \quad t\leq  u\leq v, X\in L^p(\mathcal{F}_u),
$$
\noindent where $\widetilde{Q}_X$ is a probability measure on $\mathcal{Q}_{tv}$ depending on $X$ equivalent to $P$, with density
\begin{equation*}
\frac{d \widetilde{Q}_X}{dP}= \exp \left\{ - \frac 12 \int_t^v \vert\Delta_z g(s)\vert^2 ds + \int_t^v \Delta_z g(s) dB_s \right\},
\end{equation*}
with $\Delta_z g(s) = (\Delta_z^i g(s))_{i=1,...,d}$ being defined as
\begin{equation*}
\Delta_z^i g(s) \triangleq \frac{g(s,\bar{Y}_s^u,Z_s^v)- g(s,\bar{Y}_s^u,\bar{Z}_s^u)}{d(Z_s^{v,i}- \bar{Z}_s^{u,i})} 1_{\{Z_s^{v,i} \neq \bar{Z}_s^{u,i}\}},
\end{equation*}
while
\begin{equation*}
\Delta_y g(s) \triangleq \frac{g(s,Y_s^v,Z_s^v)- g(s,\bar{Y}_s^u, Z_s^v)}{Y_s^v- \bar{Y}_s^u} 1_{\{Y_s^v \neq \bar{Y}_s^u\}}.
\end{equation*}
\end{proposition}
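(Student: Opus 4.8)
The plan is to realize both $\rho_{tv}(X)$ and $\rho_{tu}(X)$ as the time-$t$ values of BSDEs run on the \emph{common} interval $[t,v]$ with the \emph{same} terminal datum $-X$, and then to read off the sign and the size of their difference from a single linear BSDE. For $\rho_{tv}(X)=Y_t^v$ we already have such a BSDE on $[t,v]$. For $\rho_{tu}(X)$ I would use the frozen processes $\bar{Y}^u,\bar{Z}^u$ introduced above: on $[t,u]$ they coincide with $(Y^u,Z^u)$, while on $(u,v]$ one has $\bar{Y}^u_s\equiv -X$ and $\bar{Z}^u_s\equiv 0$. One checks immediately that $(\bar{Y}^u,\bar{Z}^u)$ solves a BSDE on $[t,v]$ with terminal value $-X$ and driver $g(s,\bar{Y}^u_s,\bar{Z}^u_s)\mathbf{1}_{\{s\le u\}}$ (the driver being switched off on $(u,v]$, where $\bar{Y}^u$ is constant). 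Since $\bar{Y}^u_t=Y^u_t=\rho_{tu}(X)$, the h-longevity term is exactly $\gamma(t,u,v,X)=Y_t^v-\bar{Y}^u_t=:\delta Y_t$, with terminal condition $\delta Y_v=0$.

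Second, I would linearize. Writing $\delta Z_s:=Z^v_s-\bar{Z}^u_s$ and splitting the driver difference into its $y$- and $z$-increments, on $[t,v]$ one obtains
\begin{equation*}
\delta Y_t=\int_t^v\big(\Delta_y g(s)\,\delta Y_s+\Delta_z g(s)\cdot\delta Z_s+g(s,-X,0)\,\mathbf{1}_{\{u<s\le v\}}\big)\,ds-\int_t^v\delta Z_s\,dB_s ,
\end{equation*}
where $\Delta_y g$ and $\Delta_z g$ are the usual difference quotients, both bounded by the Lipschitz constant $C$. The source term $g(s,-X,0)\mathbf{1}_{\{u<s\le v\}}$ appears because on $(u,v]$ the reference value $g(s,\bar{Y}^u_s,\bar{Z}^u_s)=g(s,-X,0)$ is present in $Y^v$ but was switched off in $\bar{Y}^u$, whereas on $[t,u]$ it cancels. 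Note that $g(s,-X,0)\in L^2$ by the Lipschitz bound $|g(s,-X,0)|\le|g(s,0,0)|+C|X|$.

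Third, I would solve this linear BSDE explicitly. Boundedness of $\Delta_z g$ makes the exponential $\exp(\int_t^{\cdot}\Delta_z g\,dB-\tfrac12\int_t^{\cdot}|\Delta_z g|^2\,ds)$ a genuine martingale (Novikov is trivial), so the measure $\widetilde{Q}_X$ with the stated density is well defined and equivalent to $P$, and $\widetilde{B}:=B-\int_t^{\cdot}\Delta_z g\,ds$ is a $\widetilde{Q}_X$-Brownian motion. Under $\widetilde{Q}_X$ the $\Delta_z g\cdot\delta Z$ term is absorbed; the integrating factor $e^{\int_t^{s}\Delta_y g(r)\,dr}$ then removes the $\Delta_y g\,\delta Y$ term, and taking $\widetilde{Q}_X$-conditional expectation (the remaining stochastic integral being a true martingale, again by boundedness) yields $\gamma$ as the $\widetilde{Q}_X$-conditional expectation of $g(s,-X,0)$ integrated over $(u,v]$ against a strictly positive exponential weight in $\Delta_y g$, which is precisely the stated formula. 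Since that weight is strictly positive and $\widetilde{Q}_X\sim P$, the equivalence becomes transparent: $g(\cdot,\cdot,0)\ge 0$ forces $g(s,-X,0)\ge 0$ and hence $\gamma\ge 0$, so h-longevity holds (this sufficiency can also be read off directly from the comparison theorem, the generator being nonnegative and the terminal value null).

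For the converse I would argue \emph{infinitesimally} rather than invert the representation, in the spirit of Proposition~\ref{prop: normal-restric}. Taking $t=u$, h-longevity gives $\rho_{uv}(X)\ge\rho_{uu}(X)=-X$; choosing the constant $X=-y$ this reads $\rho_{u,u+\varepsilon}(-y)\ge y$, and feeding this into the limit representation $g(u,y,0)=\lim_{\varepsilon\to 0}\varepsilon^{-1}\big(\rho_{u,u+\varepsilon}(-y)-y\big)$ (with $z=0$) forces $g(u,y,0)\ge 0$ for a.a.\ $u$ and every $y$. I expect the main obstacle to be the linearization step together with the integrability bookkeeping: defining $\Delta_z g$ cleanly component by component so that $\Delta_z g\cdot\delta Z$ reproduces the $z$-increment exactly, and justifying both the change of measure and the vanishing of the stochastic integrals under $\widetilde{Q}_X$ (all of which hinge on the uniform Lipschitz bound giving $|\Delta_y g|,|\Delta_z g|\le C$); the necessity direction additionally needs the usual care about $P$-a.s.\ convergence along a subsequence and the ``$dP\times dt$-a.e.'' quantifiers.
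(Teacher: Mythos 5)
Your argument is correct, and its first half is essentially the paper's own: the paper likewise freezes $(\bar{Y}^u,\bar{Z}^u)$ into a BSDE on $[t,v]$ with the driver switched off on $(u,v]$, linearizes via the same difference quotients $\Delta_y g,\Delta_z g$, changes measure by Girsanov (Novikov being trivial under the Lipschitz bound), and solves the resulting linear BSDE to obtain the representation of $\gamma$, from which sufficiency is immediate. Where you genuinely diverge is the necessity direction. The paper stays inside the representation: assuming h-longevity, it introduces, for each terminal datum $F\in L^p(\mathcal{F}_u)$, two auxiliary linear BSDEs $\eta^F$ and $R^F$ driven by $B^{\widetilde{Q}_X}$, with and without the source $g(s,-X,0)1_{[u,v]}$, observes $\eta^F_t=\widetilde{Y}_t+R^F_t\ge R^F_t$ for all $F$, and then invokes the Converse Comparison Theorem of \cite{BCHMPeng,jiang} to order the drivers and conclude $g(s,-X,0)1_{[u,v]}\ge 0$. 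You instead specialize h-longevity to $t=u$, use $\rho_{uu}(X)=-X$ with constant data $X=-y$ to get $\rho_{u,u+\varepsilon}(-y)\ge y$, and pass to the limit in the driver representation $g(u,y,0)=\lim_{\varepsilon\to0}\varepsilon^{-1}\big(\rho_{u,u+\varepsilon}(-y)-y\big)$ --- the very lemma from \cite{BCHMPeng,jiang} that the paper already uses to prove part a) of Proposition \ref{prop: normal-restric}. Your route is shorter and sidesteps applying converse comparison under the $X$-dependent measure $\widetilde{Q}_X$, where the auxiliary drivers have random coefficients and the theorem's hypotheses must be checked; its only cost is the quantifier bookkeeping you already flag (a.s.\ convergence along a subsequence, ``for a.a.\ $u$'', then all $y$ by Lipschitz continuity in $y$), which the $dP\times dt$-a.e.\ reading of the statement requires anyway. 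One further remark in your favor: your closed form, with the weight $e^{\int_t^s\Delta_y g(r)\,dr}$ inside the time integral, is the standard solution of the linear BSDE with running source $g(s,-X,0)1_{\{s>u\}}$, whereas the paper's display places $e^{\int_t^v\Delta_y g(s)\,ds}$ outside the integral; since only strict positivity of the weight matters, this discrepancy affects neither your equivalence argument nor the paper's conclusion.
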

The probability ${\widetilde{Q}_X}$ can be interpreted as an {\it h-longevity premium measure} (see~\cite{DNRG1}).

\smallskip
\noindent
\begin{proof}
Assume that $g(s,y,0) \geq 0$ for any $s \in [0,T]$ and $y \in \mathbb{R}$. We follow similar arguments as in~\cite{DNRG1}, however adapted to the cash non-additive case.
With the same notation introduced above, let us also consider
\begin{equation*}
 \widetilde{Y}_s= Y_s^v - \bar{Y}_s^u; \quad \widetilde{Z}_s= Z_s^v - \bar{Z}_s^u.
\end{equation*}
Then
\begin{eqnarray}
\widetilde{Y}_t&&=\rho_{tv}(X)-\rho_{tu}(X) \notag\\
&&=  \int_t^v \hspace{-1mm}[g(s,Y_s^v, Z_s^v)- g(s,\bar{Y}_s^u, \bar{Z}_s^u)] ds
+ \hspace{-1mm} \int_u^v \hspace{-1mm}g(s,\bar{Y}_s^u, \bar{Z}_s^u) ds
- \hspace{-1mm}\int_t^v \hspace{-1mm} [Z_s ^v- \bar{Z}_s^u] dB_s
- \hspace{-1mm}\int_u^v \hspace{-1mm} \bar{Z}_s ^u dB_s \notag \\
&&=  \int_t^v [g(s,Y_s^v, Z_s^v)- g(s,\bar{Y}_s^u, \bar{Z}_s^u)] ds - \int_t^v \widetilde{Z}_s dB_s+ \int_u^v g(s,-X,0) ds  \notag \\
&&=  \int_t^v [\Delta_y g(s) \cdot \widetilde{Y}_s + \Delta_z g(s) \cdot \widetilde{Z}_s] ds - \int_t^v \widetilde{Z}_s dB_s+ \int_u^v g(s,-X,0) ds .\label{eq: bsde-longevity-1-nca}
\end{eqnarray}
By applying Girsanov Theorem,~\eqref{eq: bsde-longevity-1-nca} becomes
\begin{equation}
\rho_{tv}(X)-\rho_{tu}(X)
= \int_t^v \Delta_y g(s) \cdot \widetilde{Y}_s  ds- \int_t^v \widetilde{Z}_s dB^{\widetilde{Q}_X}_s+ \int_u^v g(s,-X,0) ds, \label{eq: long-linear}
\end{equation}
where $B^{\widetilde{Q}_X}_s \triangleq B_s - B_t - \int_{t}^s \Delta_z g(r) \, dr$, $s \in [t,v]$, is a $\widetilde{Q}_X$-Brownian motion.
It is then well-known (see, e.g.,~\cite[Ex.~7.2]{EK-rav}) that
\begin{eqnarray*}
\gamma(t,u,v,X)=\rho_{tv}(X)-\rho_{tu}(X)
&=&E_{\widetilde{Q}_X} \left[ e^{\int_t^v \Delta_y g(s) ds} \int_u^v g(s,-X, 0) ds \Big\vert \mathcal{F}_t\right].
\end{eqnarray*}
By assumption on $g(\cdot, y, 0)$, it then follows that $\gamma(t,u,v,X)\geq 0$.

Assume now that h-longevity holds. Proceeding as before (see \eqref{eq: long-linear}),
\begin{equation*}
\widetilde{Y}_t= \int_u^v g(s,-X,0) ds + \int_t^v \Delta_y g(s) \cdot \widetilde{Y}_s  ds- \int_t^v \widetilde{Z}_s dB^{\widetilde{Q}_X}_s,
\end{equation*}
hence, by longevity,
\begin{equation*}
\widetilde{Y}_t =E_{\widetilde{Q}_X} \left[ e^{\int_t^v \Delta_y g(s) ds} \cdot \int_u^v g(s,-X, 0) ds \Big\vert \mathcal{F}_t\right] \geq 0
\end{equation*}
for any $t \leq u \leq v$ and $X \in L^p (\mathcal{F}_u)$.
Set now
\begin{eqnarray*}
\eta^F_t &=& F+ \int_t^v \left[g(s,-X,0) 1_{[u,v]}+ \Delta_y g(s) \cdot \eta^F_s \right] ds- \int_t^v Z_s^{\eta} dB^{\widetilde{Q}_X}_s \\
R^F_t &=& F+ \int_t^v  \Delta_y g(s) \cdot R^F_s  ds- \int_t^v Z_s^{R} dB^{\widetilde{Q}_X}_s
\end{eqnarray*}
for $t \leq u \leq v$ and $F \in L^p (\mathcal{F}_u)$. Consequently,
\begin{eqnarray*}
\eta^F_t &=&E_{\widetilde{Q}_X} \left[ e^{\int_t^v \Delta_y g(s) ds} \left(F+ \int_u^v g(s,-X, 0) ds \right) \Big\vert \mathcal{F}_t\right]  \\
R^F_t &=&E_{\widetilde{Q}_X} \left[ e^{\int_t^v \Delta_y g(s) ds} \cdot F \Big\vert \mathcal{F}_t\right]
\end{eqnarray*}
and, by longevity, $\eta^F_t=\widetilde{Y}_t + R^F_t \geq R_t^F$, for any $F \in L^p (\mathcal{F}_u)$.
By the Converse Comparison Theorem of BSDEs (see~\cite{BCHMPeng, jiang}), we have
\begin{eqnarray*}
&&g(s,-X,0) 1_{[u,v]}+ \Delta_y g(s) \cdot \eta^F_s \geq \Delta_y g(s) \cdot \eta^F_s \\
&&g(s,-X,0) 1_{[u,v]} \geq 0
\end{eqnarray*}
for any $\eta_t, X$ and $s\leq u \leq v$. Hence, $g(s,y,0) \geq 0$ for any $s \in [0,T]$ and $y \in \mathbb{R}$.
\end{proof}

\medskip
\noindent
The previous result reduces to~\cite[Prop.~9]{DNRG1} in the cash additive case since the driver $g$ does not depend on $y$, hence $\Delta_y g(s) \equiv 0$ for any $s$.

Observe that, by~\cite[Prop.~7.3]{EK-rav}, if $g(t,y,z)$ is convex in $(y,z)$ and decreasing in $y$, then the corresponding fully-dynamic risk measure is cash subadditive. Indeed, also the converse implication holds true (see~\cite[Prop.~20]{laeven-et-al}).

\begin{example}
Consider the driver
\begin{equation*}
g(t,y,z)= r_t y^- +z, \quad t \in [0,T], y \in \mathbb{R}, z \in \mathbb{R}^d,
\end{equation*}
where $r_t$ can be interpreted as a positive interest rate depending on time $t$.
It is immediate to see that $g(\cdot,y,z)$ is decreasing in $y$, convex and Lipschitz in $(y,z)$. The corresponding fully-dynamic risk measure satisfies cash-subadditivity, normalization (since $g(t,0,0)=0$, see Prop.~\ref{prop: normal-restric}) and h-longevity (by $g(t,y,0) \geq 0$ for any $t,y$, see Prop.~\ref{prop: longevity-BSDE-nca}).

Instead, for $\bar{g}(t,y,z)= r_t y^- +z +1$, with $y \in \mathbb{R}$, $z \in \mathbb{R}^d$,
we obtain a fully-dynamic risk measure not normalized and without restriction.
\end{example}

\section{A q-entropic risk measure on losses} \label{Sec3}

We consider the generalized entropy studied in \cite{tsallis1, tsallis2} based on the generalization of the logarithmic and exponential functions ($\ln_q$ and $\exp_q$)\footnote{
From~\cite{tsallis1}, recall that $\ln_q(x) \to \ln x$ and $\exp_q (x) \to \exp (x)$, for $q \to 1$, where
\vspace{-2mm}
$$\exp_q(X)\triangleq [1+(1-q)x]^{\frac{1}{1-q}} \,\Bigg\{
\begin{array}{l}
\mbox{for } x\geq \frac{1}{q-1}, q \in (0,1)\\
\mbox{for }x< \frac{1}{q-1}, q>1
\end{array}
; \qquad
\ln_q(x) \triangleq \frac{x^{1-q}-1}{1-q}
\,\Bigg\{
\begin{array}{l}
\mbox{for } x\geq 0, q \in (0,1)\\
\mbox{for }x>0, q>1
\end{array}
$$}.
Applications to pricing and risk measures have been recently proposed in \cite{ma-tian}.
Consider the BSDE~\eqref{eq: BSDE} with driver
\begin{equation} \label{eq: driver - entr general}
g_q(t,y,z)= \frac{q}{2} \frac{\vert z\vert^2}{1+(1-q) y}, \quad \mbox{for } q>0.
\end{equation}
Note that when $q=1$, we reduce to the classical BSDE associated to the entropic case.
Observe that, for $q\in (0,1)$, the driver $g_q$ is decreasing on $y< \frac{1}{q-1}$ and on $y> \frac{1}{q-1}$.
Also, it is convex in $(y,z)$ for $y> \frac{1}{q-1}$ (concave for $y< \frac{1}{q-1}$).
For $q>1$, the driver is increasing on $y< \frac{1}{q-1}$ and on $y> \frac{1}{q-1}$; convex in $(y,z)$ for $y< \frac{1}{q-1}$ (concave otherwise).

Referring to \cite[Prop.~3]{bahlali-et-al}, the solution exists and is unique for the terminal condition $X\in L^2(\mathcal{F}_u)$.
By the same arguments of \cite[Thm.~4.2]{ma-tian}, the solution has representation
\begin{equation} \label{eq: q-entropy representation}
Y_t= \ln_q E_P \left[\left. \exp_q(X) \right| \mathcal{F}_t \right] \triangleq \mathcal{E}^{g_q}_{tu}(X),
\end{equation}
when $q\in (0,1)$ and $X> \frac{1}{q-1}+\varepsilon$, with $Y_t>\frac{1}{q-1}+\varepsilon$, and
when $q>1$ and $X< \frac{1}{q-1}+\varepsilon$, with $Y_t<\frac{1}{q-1}- \varepsilon$ (for some $\varepsilon > 0$).
Summarizing, the risk measure
$\rho_{tu}(X) \triangleq \mathcal{E}^{g_q}_{tu}(-X)$,  $X\in L^2(\mathcal{F}_u)$ ($t \leq u$),
associated to \eqref{eq: q-entropy representation} is fully-dynamic and cash non-additive, in particular subadditive for $q\in (0,1)$. We work from now on with $q\in (0,1)$.

Observe that if $q\approx 1$ the risk measure is close to the entropic one $\rho^{entr}_{tu}$ and the risk of \emph{all} positions can be quantified. The further we depart from $1$, the domain of quantifiable positions is more and more restricted up to $X> -1 +\varepsilon$. This can be considered a serious drawback, also in view of the associated risk-acceptable set.
To be able to quantify the risk of all losses, we need to overcome this drawback. For this, we suggest the following risk measure, which we call {\it q-entropic risk measure on losses}:
\begin{equation} \label{eq: g-entropic losses}
\rho_{tu}^{q}(X) \triangleq \rho_{tu} (- (X+\beta)^-) = \mathcal{E}^{g_q}_{tu} ((X+\beta)^-),
\end{equation}
where $\beta>0$ is a given target that can be interpreted as an acceptable loss level. Here $- (X+\beta)^-\leq 0$ represents then the loss exceeding $\beta$.
Observe that the risk of all positions $X$ can be quantified, in fact $(X+\beta)^- \geq 0 > \frac{1}{q-1} + \varepsilon$ (for some small $\varepsilon>0$).
 Since $g_q\geq 0$, the Comparison Theorem guarantees $\rho_{tu} (- (X+\beta)^-) \geq 0$, which gives a margin for hedging the loss.
The monotonicity and convexity of $\rho_{tu}^{q}$ are implied by those of $\rho_{tu}$ together with the increasing monotonicity and concavity of $-(x + \beta)^-$.

We investigate now the sensitivity of $\rho^q_{tu}$ with respect to $q \in (0,1)$.
\begin{proposition}
For any $X \in L^2(\mathcal{F}_u), \beta \in \mathbb{R}$, the q-entropic risk measure on losses $\rho^q_{tu}$ is increasing in $q$ with
\begin{equation*}
E_P[ -(X+\beta)^- \vert \mathcal{F}_t]  = \rho_{tu}^0(X)
\leq \rho_{tu}^q(X) \leq \rho_{tu}^1(X) =  \rho_{tu}^{entr}(-(X+\beta)^-).
\end{equation*}
\end{proposition}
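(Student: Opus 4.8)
The plan is to start from the representation in~\eqref{eq: q-entropy representation}: writing $W \triangleq (X+\beta)^- \geq 0$, definition~\eqref{eq: g-entropic losses} reads $\rho^q_{tu}(X) = \ln_q E_P[\exp_q(W)\mid \mathcal F_t]$, i.e. $\rho^q_{tu}(X) = \phi_q^{-1}(E_P[\phi_q(W)\mid\mathcal F_t])$ with $\phi_q \triangleq \exp_q$ increasing and $\phi_q^{-1} = \ln_q$. This is a conditional quasi-arithmetic mean of $W$, so monotonicity in $q$ is exactly the pairwise comparison $\rho^{q_1}_{tu}(X)\le \rho^{q_2}_{tu}(X)$ for $0\le q_1<q_2\le 1$. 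I would fix such a pair and reduce the claim, applying the increasing map $\phi_{q_2}^{-1}$, to $\phi_{q_2}\big(\phi_{q_1}^{-1}(E_P[\phi_{q_1}(W)\mid\mathcal F_t])\big)\le E_P[\phi_{q_2}(W)\mid\mathcal F_t]$. Setting $h\triangleq \exp_{q_2}\circ\ln_{q_1}$, the left-hand side equals $h(E_P[\phi_{q_1}(W)\mid\mathcal F_t])$ and the right-hand side equals $E_P[h(\phi_{q_1}(W))\mid\mathcal F_t]$, so by conditional Jensen it suffices that $h$ be convex on the range of $\phi_{q_1}(W)$.

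Since $W\ge 0$ forces $\phi_{q_1}(W)=\exp_{q_1}(W)\ge \exp_{q_1}(0)=1$, it is enough to show $h$ convex on $[1,\infty)$; this is the heart of the argument and precisely the reason for working with losses $(X+\beta)^-\ge0$. Explicitly, with $a\triangleq\frac{1-q_2}{1-q_1}\in(0,1)$ and $b\triangleq 1-q_1\in(0,1)$ one gets $h(x)=\big[(1-a)+a\,x^{b}\big]^{1/(1-q_2)}$, and writing $u(x)=(1-a)+ax^b$ together with $c=1/(1-q_2)=1/(ab)$, a direct computation gives $h''=c\,u^{c-2}\big[(c-1)(u')^2+u\,u''\big]$. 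The key simplification is the identity $cb=1/a$, which collapses $(c-1)(u')^2+u\,u''$ to $ab\,x^{b-2}(1-a)(x^{1-q_1}-q_1)$; every factor is positive for $x\ge1$ (using $x^{1-q_1}\ge1>q_1$), so $h''>0$. I expect this algebraic reduction to be the main obstacle, the more so as the naive estimate fails without the restriction $x\ge1$, i.e. without $W\ge0$.

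Finally I would identify the endpoints by specialising $q$. At $q=0$ the driver $g_0$ in~\eqref{eq: driver - entr general} vanishes identically, so the BSDE becomes a martingale and $\rho^0_{tu}(X)=E_P[(X+\beta)^-\mid\mathcal F_t]$ (equivalently, $\exp_0(x)=1+x$ and $\ln_0(x)=x-1$ turn the quasi-arithmetic mean into the plain conditional expectation). At $q=1$, $\exp_q$ and $\ln_q$ reduce to the usual exponential and logarithm, whence $\rho^1_{tu}(X)=\ln E_P[\exp((X+\beta)^-)\mid\mathcal F_t]=\rho^{entr}_{tu}(-(X+\beta)^-)$. Since the convexity of $h$ holds for every $0\le q_1<q_2\le1$ (the degenerate values $q_1=0$ and $q_2=1$ being covered by the same sign computation, the latter via the limit $a\to0$), the pairwise comparison chains together to yield $\rho^0_{tu}(X)\le \rho^q_{tu}(X)\le\rho^1_{tu}(X)$ and the asserted monotonicity in $q$. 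Throughout, well-definedness and integrability of $\exp_q(W)$ are guaranteed by the existence and representation results recalled before~\eqref{eq: q-entropy representation}.
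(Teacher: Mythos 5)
Your proof is correct, but it takes a genuinely different route from the paper's. The paper argues at the level of drivers: it computes $\frac{\partial g_q}{\partial q}(t,y,z)=\frac{1}{2}\frac{z^2(1+y)}{(1+(1-q)y)^2}\geq 0$ for $y\geq -1$ and concludes immediately by the Comparison Theorem for BSDEs, the relevant solutions satisfying $Y\geq 0\geq -1$ because both the terminal value $(X+\beta)^-$ and $g_q$ are nonnegative. You never touch the BSDE: you use the closed-form representation~\eqref{eq: q-entropy representation} to read $\rho^q_{tu}$ as a conditional quasi-arithmetic mean, and you reduce monotonicity in $q$ to convexity of $h=\exp_{q_2}\circ\ln_{q_1}$ on $[1,\infty)$ plus conditional Jensen. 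Your algebra checks out: with $u(x)=(1-a)+ax^b$ and $c=1/(ab)$ one indeed gets $(c-1)(u')^2+u\,u''=ab(1-a)\,x^{b-2}\left(x^{1-q_1}-q_1\right)$, positive for $x\geq 1$, and your observation that convexity fails below $x=q_1^{1/(1-q_1)}$ pinpoints exactly where the restriction to losses $W=(X+\beta)^-\geq 0$ enters --- the same role played by the constraint $y\geq -1$ in the paper's derivative computation. What each approach buys: the paper's proof is two lines, requires no explicit formula for the solution, and works for any terminal condition keeping the solution above $-1$; yours is more elementary (no BSDE comparison theorem), handles the endpoints $q=0$ and $q=1$ inside the same computation, and makes quantitatively visible why positivity of the terminal value is indispensable. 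The cost is that you rely on the validity of~\eqref{eq: q-entropy representation}, which the paper asserts only for terminal conditions bounded away from $\frac{1}{q-1}$ (satisfied here since $W\geq 0$), and on integrability of $\exp_q(W)$, which you legitimately delegate to the results recalled in Section~\ref{Sec3}; at $q=1$ the upper bound holds trivially if the entropic value is infinite. One final point: your identification $\rho^0_{tu}(X)=E_P[(X+\beta)^-\vert\mathcal{F}_t]$ is the one consistent with~\eqref{eq: g-entropic losses} and with $\rho^q_{tu}\geq 0$ as noted there; the sign in the proposition's display, $E_P[-(X+\beta)^-\vert\mathcal{F}_t]$, appears to be a typo in the paper, which you have silently and correctly repaired.
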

In other words, when considered on losses, the classical entropic risk measure is more conservative than any q-entropy.\\
\begin{proof}
It is easy to check that
\begin{equation*}
\frac{\partial g_q}{\partial q}(t,y,z)=\frac{1}{2} \frac{z^2 (1+y)}{(1+(1-q) y)^2} \geq 0 \mbox{ for any } t \in [0,T], y \geq -1, z \in \mathbb{R}.
\end{equation*}
Increasing monotonicity of $\rho_{tu}^q(X)$ in $q$ follows from the Comparison Theorem of BSDEs and the increasing monotonicity of $g_q(t,y,z)$ in $q$.
\end{proof}

\vspace{2mm}
Notice that ~\eqref{eq: q-entropy representation} directly gives that $\rho_{tu}^q$ satisfies normalization and restriction.
To deal with horizon risk, replace the driver $g_q$ in~\eqref{eq: driver - entr general} with
\begin{equation*}
\tilde{g}_q(t,y,z)= \frac{q}{2} \frac{z^2}{1+(1-q) y} + a(t)
\end{equation*}
for some deterministic $a(t)\geq 0$, for all $t$, and consider the associated BSDE \eqref{eq: BSDE}.
Then we define the {\it translated q-entropic risk measures on losses} by
\begin{equation*}
\rho_{tu}^{q,a}(X) \triangleq \mathcal{E}_{tu}^{\tilde{g}_q}( (X+\beta)^-)= \ln_q E_P \left[\left. \exp_q \left((X+\beta)^- + \int_t^u a(s) ds \right) \right| \mathcal{F}_t \right],
\end{equation*}
which is convex, monotone, and satisfies h-longevity since $a(t) \geq 0$ for all $t$.

\section{Risk measures generated by a family of BSDEs} \label{Sec4}

Now we consider general fully-dynamic risk measures induced by a family of BSDEs of type~\eqref{eq: BSDE} with drivers $\mathcal{G}=(g_u)_{u \in [0,T]}$ depending on the time horizon $u$ of $\rho_{tu}$, Lipschitz, and convex in $(y,z)$.
To be more precise, assume that, for any $t \leq u$,
\begin{equation} \label{eq: rho-from-bsde-gt}
\rho_{tu}(X)=\rho_{tu}^{\mathcal{G}}(X) \triangleq \mathcal{E}^{g_u} (-X \vert \mathcal{F}_t), \mbox{ for any } X \in L^{2} (\mathcal{F}_u).
\end{equation}
Then $(\rho^{\mathcal{G}}_{tu})_{t,u}$ satisfies monotonicity, convexity, and continuity from above/below.
As in the cash additive case (see~\cite{DNRG1}), if $g_u(v,0,0)=0$ for any $v \leq u \leq T$, then $\rho^{\mathcal{G}}_{tu} (0)=0$ for any $t \leq u$. In general, however, this \textit{does not} guarantee the restriction property.

The following result characterizes when the restriction property holds.
\begin{proposition} \label{prop: restriction-BSDE-family-nca}
The fully-dynamic risk measure $ (\rho_{tu})_{t,u}$ satisfies the restriction property if and only if $g_u$ is constant in $u$ and $g_u(t,y,0)=0$ for any $t \leq u $ and $y \in \mathbb{R}$.
\end{proposition}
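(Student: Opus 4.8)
The plan is to treat the two implications separately, reducing as much as possible to the single-driver analysis already carried out in Proposition~\ref{prop: normal-restric}(b), and upgrading its conclusions to the horizon-dependent setting.

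For the sufficiency (``if'') direction, suppose $g_u \equiv g$ does not depend on the horizon and $g(t,y,0)=0$ for all $t,y$. Then for every $u$ the defining relation~\eqref{eq: rho-from-bsde-gt} uses one and the same driver $g$, so the family $(\rho_{tu})_{t,u}$ is exactly the single-BSDE risk measure of Section~\ref{Sec2}. Restriction then follows immediately from the converse implication in Proposition~\ref{prop: normal-restric}(b): the extended solution there (constant equal to $-X$ on $[u,v]$, with vanishing $Z$) solves the $g$-BSDE with terminal time $v$ precisely because $g(t,y,0)=0$, so that $\rho_{tv}(X)=\rho_{tu}(X)$ by uniqueness, and the argument applies verbatim.

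For the necessity (``only if'') direction, fix $u \leq v$ and $W=-X$ with $X \in L^2(\mathcal{F}_u)$, and let $(Y^u,Z^u)$, $(Y^v,Z^v)$ denote the solutions of the BSDEs with drivers $g_u$ (terminal time $u$) and $g_v$ (terminal time $v$), both with terminal value $W$. Restriction means $Y^u_t = Y^v_t$ for all $t \leq u$; at $t=u$ it gives $Y^v_u = Y^u_u = W$. Writing the $g_v$-BSDE on $[u,v]$ then yields $\int_u^v g_v(s,Y^v_s,Z^v_s)\,ds = \int_u^v Z^v_s\,dB_s$, the exact analogue of~\eqref{fv=mart}. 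I would then invoke the same martingale versus finite-variation dichotomy (\cite[Prop.~1.2 in Ch.~4]{revuz-yor}) used in Proposition~\ref{prop: normal-restric}(b) to conclude $Z^v\equiv 0$ on $[u,v]$ and $g_v(s,Y^v_s,0)=0$ there; taking $X$ deterministic and letting $u$ decrease recovers $g_v(s,y,0)=0$ for a.a.\ $s\le v$ and all $y$, which after relabeling is precisely the condition $g_u(t,y,0)=0$ for $t\le u$.

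It remains to show that the driver is constant in the horizon, and this is the step I expect to be the main obstacle, since restriction a priori only equates the \emph{outputs} $Y^u_t=Y^v_t$, not the drivers themselves. The key observation is that the vanishing of $Z^v$ and of $g_v(\cdot,\cdot,0)$ on $[u,v]$ forces $Y^v$ to remain constant equal to $W$ on $[u,v]$, so that $\rho_{tv}(X)=\mathcal{E}^{g_v}(W\mid\mathcal{F}_t)$ in fact coincides with the $g_v$-expectation computed with \emph{terminal time} $u$. Hence restriction upgrades to a genuine identity of $g$-expectations, $\mathcal{E}^{g_u}(W\mid\mathcal{F}_t)=\mathcal{E}^{g_v}(W\mid\mathcal{F}_t)$ for all $W\in L^2(\mathcal{F}_u)$ and $t\le u$, both now sharing terminal time $u$. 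Applying the Converse Comparison Theorem (\cite{BCHMPeng,jiang}) on $[0,u]$ --- equivalently, inserting the terminal conditions $y+z\cdot(B_{t+\varepsilon}-B_t)$ into the recovery formula of Proposition~\ref{prop: normal-restric}(a) for both $\mathcal{E}^{g_u}$ and $\mathcal{E}^{g_v}$ --- gives $g_u(s,\cdot,\cdot)=g_v(s,\cdot,\cdot)$ for a.a.\ $s\le u$. Letting $v$ range over $[u,T]$ shows the drivers form a consistent family agreeing on overlaps, i.e.\ each $g_u$ is the restriction of a single horizon-independent driver, which is the desired constancy in $u$.
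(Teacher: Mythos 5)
Your proposal is correct, and it uses the same two ingredients as the paper's proof --- the martingale-versus-finite-variation argument of Proposition~\ref{prop: normal-restric}(b) and the driver-recovery formula from the Converse Comparison Theorem (\cite{BCHMPeng, jiang}) --- but it assembles them in the opposite order, which amounts to a different decomposition of the necessity direction. The paper proves constancy in $u$ \emph{first}: it applies the recovery formula directly to $\rho_{tu}$ and $\rho_{tv}$ with the small-time test positions $-y-z\cdot(B_{t+\varepsilon}-B_t)$, uses restriction to identify the two limits, and only afterwards invokes Proposition~\ref{prop: normal-restric}(b) on the now horizon-independent driver to obtain $g(t,y,0)=0$. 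You instead first rerun the Proposition~\ref{prop: normal-restric}(b) argument inside the family setting to get $Z^v\equiv 0$ and $g_v(s,y,0)=0$, and then exploit the resulting constancy of $Y^v$ on $[u,v]$ to collapse the $g_v$-BSDE to terminal time $u$, so that the Converse Comparison Theorem compares two $g$-expectations sharing the \emph{same} terminal horizon. This extra collapsing step buys precision at exactly the point the paper leaves implicit: the recovery formula of \cite{BCHMPeng, jiang} pertains to BSDEs solved on the short interval $[t,t+\varepsilon]$, and applying it to $\rho_{tu}$, whose terminal time $u$ stays far from $t+\varepsilon$, is legitimate only once the $\mathcal{F}_{t+\varepsilon}$-measurable terminal value can be moved back from $u$ to $t+\varepsilon$, i.e.\ once the within-driver condition $g_u(\cdot,\cdot,0)=0$ is in hand (for a driver like $g\equiv 1$ the naive difference quotient even diverges); your ordering supplies this before the recovery formula is used. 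Two details to spell out when writing this up: the identity $\int_{u'}^v g_v(s,Y^v_s,Z^v_s)\,ds=\int_{u'}^v Z^v_s\,dB_s$ must hold for every intermediate $u'\in[u,v]$ before the martingale/finite-variation dichotomy of \cite[Prop.~1.2 in Ch.~4]{revuz-yor} applies --- restriction does give this, since $X\in L^2(\mathcal{F}_{u'})$ yields $Y^v_{u'}=-X$ for all such $u'$ --- and your final gluing argument needs $g_u(t,y,0)=0$ for $t\le u$ as well as for $g_v$, which the relabeling you mention at the end of your second step indeed provides.
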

\noindent
\begin{proof}
If $g_u$ is constant in $u$, the restriction property~\eqref{eq: restriction} follows directly by Prop.~\ref{prop: normal-restric}.

Conversely, assume that the restriction property holds, i.e. $\rho_{tu}(X)=\rho_{tv}(X)$ for any $t \leq u \leq v$ and $X \in L^2(\mathcal{F}_u)$.
Proceeding as in the proof of the Converse Comparison Theorem of~\cite[Thm.~ 4.1]{BCHMPeng} and~\cite[Lemma~ 2.1]{jiang} and by the restriction property,
\begin{eqnarray*}
g_u(t,y,z)&=&\lim _{\varepsilon \to 0} \frac{\rho_{tu} \left(-y-z \cdot (B_{t + \varepsilon} -B_{\varepsilon})  \right)-y}{\varepsilon}=\lim _{\varepsilon \to 0} \frac{\rho_{t,t+\varepsilon} \left(-y-z \cdot (B_{t + \varepsilon} -B_{\varepsilon})  \right)-y}{\varepsilon}
\end{eqnarray*}
with convergence in $L^p$ with $p \in [1,2)$, for any $y \in \mathbb{R}, z \in \Bbb R^d$, $t \leq u$.
Taking a subsequence, we obtain that
\begin{equation*}
\frac{\rho_{tu} \left(-y-z \cdot (B_{t + \varepsilon} -B_{\varepsilon}) \right)-y}{\varepsilon} \longrightarrow g_u(t,y,z), \quad
\varepsilon \to 0, \quad  P-a.s.
\end{equation*}
We proceed similarly for $g_v$.
By restriction, we have
\begin{eqnarray*}
g_u(t,y,z)&=&\lim _{\varepsilon \to 0} \frac{\rho_{tu} \left(-y-z \cdot (B_{t + \varepsilon} -B_{\varepsilon}) \right)-y}{\varepsilon} \\
&= & \lim _{\varepsilon \to 0} \frac{\rho_{tv} \left(-y-z \cdot (B_{t + \varepsilon} -B_{\varepsilon}) \right) -y}{\varepsilon}
= g_v(t,y,z)
\end{eqnarray*}
for any $u\leq v$, with $P$-a.s. convergence. This proves that $g_u$ is constant in $u$.
By Prop.~\ref{prop: normal-restric}, the condition $g_u(t,y,0)=0$ should hold for any $t \leq u \leq T$ and $y \in \mathbb{R}$.
\end{proof}

Similarly to the cash additive case (see~\cite{DNRG1}), time-consistency and h-longevity are related to the monotonicity of $\mathcal{G}$. We recall that by \textit{increasing family} $\mathcal{G}=(g_u)_{u \in [0,T]}$ it is meant, for any $t \leq u$,
\begin{equation*}
g_t(v,y,z) \leq g_u(v,y,z)  \mbox{ for any } v \in [0,t], y \in \Bbb R, z \in \Bbb R^d.
\end{equation*}

\begin{theorem} \label{thm: sub-tc-bsde-gt-nca}
Let $(\rho_{tu})_{t,u}$ be the fully-dynamic risk measure in~\eqref{eq: rho-from-bsde-gt}.

\noindent
a) The family $\mathcal{G}$ is increasing if and only if $(\rho_{tu})_{t,u}$ satisfies sub time-consistency in \eqref{eq: sub TC-nca}.

\noindent
b) $\mathcal{G}= \{ g \}$  if and only if $(\rho_{tu})_{t,u}$ satisfies strong time-consistency.

\noindent c) If $\mathcal{G}$ is increasing and $g_u\geq 0$ for any $u \in [0,T]$, then $(\rho_{tu})_{t,u}$ satisfies h-longevity.
\end{theorem}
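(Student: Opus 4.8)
The three parts share a common engine: the connection between comparison of drivers and comparison of the associated $g$-expectations, together with the recursive structure of BSDE solutions. The plan is to treat each part by reducing the risk-measure statements back to properties of the family $\mathcal{G}$ via the Comparison Theorem and its converse, exactly as in the cash additive treatment of~\cite{DNRG1}, but keeping careful track of the $y$-dependence that cash non-additivity introduces.

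For part a), I would argue both directions through the Comparison Theorem. For the forward direction, fix $0 \leq s \leq t \leq u$ and $X \in L^2(\mathcal{F}_u)$. The quantity $\rho_{st}(-\rho_{tu}(X))$ is the $g_t$-expectation applied to a terminal condition that is itself built from the $g_u$-expectation, whereas $\rho_{su}(X)$ is a single $g_u$-expectation run from $s$ to $u$. The natural strategy is to view both as solutions of BSDEs on $[s,u]$ and compare their drivers on each subinterval: on $[t,u]$ both use $g_u$, while on $[s,t]$ one uses $g_t$ and the other uses $g_u$. Since $\mathcal{G}$ is increasing, $g_t \leq g_u$ on $[s,t]$, and the Comparison Theorem then yields the inequality~\eqref{eq: sub TC-nca}. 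The converse direction uses the Converse Comparison Theorem: if sub time-consistency holds for all $X$, one extracts the infinitesimal comparison of $g_t$ and $g_u$ by the same limiting procedure $\varepsilon \to 0$ used in Prop.~\ref{prop: restriction-BSDE-family-nca}, recovering $g_t(v,y,z) \leq g_u(v,y,z)$.

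For part b), the equivalence between $\mathcal{G}$ being a singleton and strong time-consistency is essentially the recursivity property of a single $g$-expectation. If $\mathcal{G} = \{g\}$, strong time-consistency $\rho_{tu}(-\rho_{uv}(X)) = \rho_{tv}(X)$ is precisely the flow/recursivity property of the operator $\mathcal{E}^g(\cdot \vert \mathcal{F}_t)$, which holds by uniqueness of BSDE solutions (pasting the solution on $[u,v]$ with that on $[t,u]$). Conversely, if strong time-consistency holds, then in particular both the sub- and super- inequalities of part a) hold for all $t \leq u$, forcing $g_t = g_u$ for all $t,u$ by the Converse Comparison Theorem, i.e. $\mathcal{G}$ is constant. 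For part c), I would combine part a) with Prop.~\ref{prop: longevity-BSDE-nca}: the increasing property gives the recursive sub time-consistency structure, while the sign condition $g_u \geq 0$ feeds directly into the representation of $\gamma$ to force $\gamma(t,u,v,X) \geq 0$, after checking that the relevant driver on the extension interval $(u,v]$ is $g_v \geq 0$ evaluated at the frozen terminal value.

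The main obstacle is the bookkeeping in part a) caused by the $y$-dependence of the drivers. In the cash additive case the drivers do not depend on $y$, so comparing two BSDEs reduces to comparing drivers evaluated along known processes; here the two solutions being compared have different $Y$-components, and the Comparison Theorem must be applied to genuinely distinct BSDEs whose drivers differ both in the family index and through their $y$-arguments. I would handle this by writing the difference process and linearizing the driver increments as in the proof of Prop.~\ref{prop: longevity-BSDE-nca} — introducing the incremental slopes $\Delta_y g$ and $\Delta_z g$ and a Girsanov change of measure — so that the sign of the difference is controlled by the sign of $g_u - g_t$ on the relevant interval. Verifying that this linearization is legitimate (the slopes are bounded by the Lipschitz constant, the change of measure is well-defined) is routine but is where the real care is needed.
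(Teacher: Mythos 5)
Your architecture is exactly the one the paper intends: the paper in fact \emph{omits} this proof, referring to the cash additive analogues in \cite[Thm.~13, Prop.~16]{DNRG1}, and your sketch is the natural adaptation of that argument — Comparison Theorem plus the flow/pasting property of $\mathcal{E}^{g_u}$ for the forward directions, Converse Comparison Theorem via the $\varepsilon \to 0$ limit formula for the converse directions, and the linearization with the incremental slopes $\Delta_y g$, $\Delta_z g$ and a Girsanov change of measure (as in Prop.~\ref{prop: longevity-BSDE-nca}) to control the $y$-dependence. Your part c) is also sound: either via the two-step decomposition $\rho_{tv}(X)-\rho_{tu}(X)=\bigl[\mathcal{E}^{g_v}_{tv}(-X)-\mathcal{E}^{g_v}_{tu}(-X)\bigr]+\bigl[\mathcal{E}^{g_v}_{tu}(-X)-\mathcal{E}^{g_u}_{tu}(-X)\bigr]$, where Prop.~\ref{prop: longevity-BSDE-nca} with $g_v(s,y,0)\geq 0$ handles the horizon extension and the comparison $g_u \leq g_v$ handles the driver change; or, as you suggest, through a), noting that $g_v \geq 0$ and comparison with the zero driver give $\rho_{uv}(X) \geq E_P[-X \vert \mathcal{F}_u] = -X$ for $X \in L^2(\mathcal{F}_u)$, so that monotonicity and \eqref{eq: sub TC-nca} yield $\rho_{tu}(X) \leq \rho_{tu}(-\rho_{uv}(X)) \leq \rho_{tv}(X)$.

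There is, however, one genuine step you gloss over in the converse directions of a) and b). Sub (resp.\ strong) time-consistency, rewritten through the flow property of the single driver $g_u$, gives $\mathcal{E}^{g_t}(\xi \vert \mathcal{F}_s) \leq \mathcal{E}^{g_u}(\xi \vert \mathcal{F}_s)$ (resp.\ equality) \emph{only} for terminal conditions of the special form $\xi = \rho_{tu}(X)$, $X \in L^2(\mathcal{F}_u)$, whereas the limiting procedure you import from Prop.~\ref{prop: restriction-BSDE-family-nca} needs it at $\xi = y + z\cdot(B_{t+\varepsilon}-B_t)$, and the Converse Comparison Theorem of \cite{BCHMPeng, jiang} at all $\xi \in L^2(\mathcal{F}_t)$. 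In Prop.~\ref{prop: restriction-BSDE-family-nca} this issue does not arise, since restriction equates $\rho_{tu}$ and $\rho_{tv}$ at the \emph{same} argument; and in the cash additive setting of \cite{DNRG1} it is invisible, since there $\rho_{tu}(X) = \rho_{tu}(0) - X$ for $X \in L^2(\mathcal{F}_t)$, so the range of $\rho_{tu}$ trivially covers $L^2(\mathcal{F}_t)$. In the present cash non-additive setting you must add a surjectivity lemma: given $\xi \in L^2(\mathcal{F}_t)$, solve pathwise the random Lipschitz ODE $dY_s = -g_u(s, Y_s, 0)\, ds$ on $[t,u]$ with $Y_t = \xi$, and set $X \triangleq -Y_u$; then $(Y,0)$ solves the BSDE with driver $g_u$ and terminal value $-X$, hence $\rho_{tu}(X) = \xi$ (adaptedness and square integrability follow from the standard assumptions via Gronwall). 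With this short insertion your converse arguments, and hence the whole proposal, go through.
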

We omit the proof which is similar to the one of~\cite[Thm.~13, Prop.~16]{DNRG1}.

Inspired by Sec.~\ref{Sec3}, we can consider the family $\mathcal{G}$ of drivers
\begin{equation*}
\tilde{g}_{q,u}(t,y,z)= \frac{q_u}{2} \frac{z^2}{1+(1-q_u) y} + a_u(t) \qquad\text{with } a_u(t) \geq 0,
\end{equation*}
which leads to a {\it generalized (translated) q-entropic risk measure on losses}.



\end{document}